\documentclass{article}
\usepackage{graphicx} 

\usepackage{amssymb}
\usepackage{authblk}
\usepackage{amsmath}
\usepackage{amsthm}
\usepackage{comment}
\usepackage{prodint}
\usepackage{appendix}
\usepackage{subcaption}
\usepackage{tablefootnote}
\usepackage{longtable}
\usepackage{float}
\usepackage[style=apa,sorting=nyt,maxbibnames=99]{biblatex}
\makeatletter
\renewcommand\@biblabel[1]{}
\makeatother

\newcommand{\conp}{\overset{P}{\rightarrow}}

\newtheorem{theorem}{Theorem}

\title{A non-parametric estimator for excess recurrent events}

\author{Jonatan Hedberg$^{1}$, Ola Hössjer$^{2}$, Caroline Nordenwall$^{3}$ Therese M-L Andersson$^{1}$, Elisavet Syriopoulou$^{1}$   \\
        \small $^{1}$Department of Medical Epidemiology and Biostatistics, Karolinska Institutet, Stockholm, Sweden \\
        \small $^{2}$Department of Mathematics, Stockholm University, Stockholm, Sweden \\
        \small $^{3}$Department of Molecular Medicine and Surgery, Karolinska Institutet, Stockholm, Sweden \\
}

\begin{document}
\maketitle

\begin{abstract}
\noindent
Measuring disease burden is an important part of both public health research and health economics. Some measures of disease burden, such as hospitalisations, are made up of recurrent events. Assessing what events are related to a particular disease is however non-trivial. We extend the notion of relative survival to recurrent events by developing a novel non-parametric estimator. The estimator combines data from some cohort with aggregated population level data to estimate the number of excess recurrent events. Using empirical process theory, we show that the estimator converges weakly to a mean zero Gaussian process under mild regularity conditions, and provide a consistent estimator for the covariance function. We also evaluate the finite sample properties of the estimator through simulations and provide a practical example using data from Swedish patients with rectal cancer. 
\newline 

\noindent
\textit{Keywords:} Excess recurrent events; Non-parametric estimator; Relative survival
\end{abstract}

\section{Introduction}
Hospitalisations and other recurrent events occur frequently in epidemiological data. The events are often expensive, generating a burden on the medical system and debilitating to the patients suffering them. As such there is a great societal need in being able to quantify and understand what diseases cause many events and thereby need greater intervention or prevention.

Reasons for hospitalisations can often be missing or unreliably recorded in registry data. The number of hospitalisations related to a particular condition might therefore be difficult to estimate. Even in situations where the reason for hospitalisation is known, there is no guarantee that the recorded reason accurately captures the disease burden. It might be that the condition or its treatment increases the risk of other types of hospitalisations where the primary reason is something else. For example, a patient with cancer undergoing chemotherapy, may have several hospitalisations due to serious infections that may be a result of the treatment. Only measuring the hospitalisations that are labelled as cancer-related would then result in an underestimation of the disease burden. Therefore, there is a need for methods and concepts that can bypass this problem.

Relative survival methods deal with this problem in the survival analysis setting. For these methods population level mortality data (usually from life tables) are used to adjust the observed survival in the cohort currently being analysed. There is much literature on this subject. Several non-parametric estimators for relative survival exist \parencite{pohar,andersen1989}, and the concept has been extended to regression models both semi-parametric \parencite{sasieni1996} and parametric \parencite{nelson2007}.

To our knowledge, there is no literature at the moment containing methods for how to handle relative recurrent events. We therefore propose a non-parametric estimator that allows for the estimation of the excess number of recurrent events in relation to what would be expected in a population without the disease in question. This allows for the study and quantification of disease burden without the need for a control group. 

This article is structured as follows: first in section \ref{sec:est} we present the estimator with necessary assumptions. We also present three theorems detailing the weak convergence of the estimator to a zero mean Gaussian process and how to estimate its asymptotic covariance function. The proofs of the theorems can be found in the appendix. Section \ref{sec:sim} presents a simulation study where we evaluate the finite sample properties of the estimator. The method is then illustrated using real data in section \ref{sec:app}. Lastly a short discussion is provided in section \ref{sec:disc}.

\section{The non-parametric estimator} \label{sec:est}
Assume a sample of size $n$ and let $N^{R*}_i(t)$ be the number of recurrent events of individual $i$ up until and including time $t$ for $i=1,...,n$. Let $D_i$ be the time of death of individual $i$, and let $N^{D*}_i(t)=1(D_i\le t)$ denote whether (1) or not (0) $i$ has died at time $t$. Define a fixed end of follow-up $\tau$ so that we only consider $t\in[0,\tau]$. 
We assume right censoring, $C_i\wedge\tau$, where $C_i$ is a random variable and that $C_i\perp N_i^{R*},N_i^{D*}$. Due to censoring we may not observe the full number of events and therefore define the stopped processes $N_i^R(t):=N_i^{R*}(t\wedge C_i)$ and $N_i^D(t):=N^{D*}_i(t\wedge C_i)$.

Assume that we have independent and identically distributed (iid) characteristics $X_i$ and $Z_i$ where $X_i$ represent baseline characteristics of individual $i$ known to the investigator, and $Z_i$ disease specific characteristics. It is reasonable to assume that both $X_i$ and $Z_i$ affect the rate of recurrent events as well as death. Assuming an additive excess rate model this leads us to the following recurrent events rate and decomposition of individual $i$: $d\Lambda^R_i(t)=d\Lambda^R(t;X_i,Z_i)=d\Lambda_{E}^R(t;X_i,Z_i)+d\Lambda_{P}^R(t;X_i)$, where subscript $E$ denotes the excess rate and $P$ the general population rate. As such, the recurrent events rates are functions of random variables.
In the case of $d\Lambda_P(t;X)$ we shall assume that it is a known function of $t$ and $X$ (e.g.\ from estimates from population level data). For mortality we could assume a similar decomposition, but since we are interested in the excess recurrent events rate as long as the individual is alive, the reason for their death is of lesser concern. Thus, we will be satisfied by formulating the following model for the hazard rate: $d\Lambda^D_i(t)=d\Lambda^D(t;X_i,Z_i)$, with accompanying survival curve $S(t;X_i,Z_i)=\exp[-\int_0^t d\Lambda^D(u;X_i,Z_i)]$. 

Let $Y_i(t)=1[\min(C_i,D_i)\ge t]$ be an indicator of whether $i$ is alive and not censored at time $t$. Since $D_i$ and $C_i$ are independent, it follows that $E[Y_i(t)]=C(t)S(t)$, where $C(t)=P(C_i\ge t)$ and $S(t)=P(D_i\ge t)=E[S(t;X_i,Z_i)]$. The recurrent excess rates and population level rates of a randomly chosen individual from the population, at time $t$, are given by  
$d\Lambda_E^R(t) = E[Y_i(t)d\Lambda_E^R(t;X_i,Z_i)]/[C(t)S(t)]$ 
and $d\Lambda_P^R(t) = E[Y_i(t)d\Lambda_P^R(t;X_i)]/[C(t)S(t)]$ 
respectively. The hazard rate at time $t$ of a randomly chosen individual from population is given by $d\Lambda^D(t)=E[Y_i(t)d\Lambda^D(t;X_i,Z_i)]/[C(t)S(t)]$.

Following the same thinking as in Ghosh \& Lin \parencite*{gosh2000}, we are interested in knowing how many recurrent events we expect to occur while individuals are still alive. For this reason we want to estimate the expected number 
\begin{align}
    \label{eq:estimand}
    \mu_E(t)&=\int_0^td\Lambda_{E}^R(u)S(u)
\end{align}
of excess recurrent events up to time $t$. This is the marginal number of excess recurrent events that a randomly sampled individual with the disease suffers during $[0,t]$ as long as they are alive (whereafter they can have no such events). 

We propose estimating $\mu_E(t)$ by plugging in marginal non-parametric estimates of $S(t)$ and $d\Lambda_{E}^R(t)$ into (\ref{eq:estimand}). These estimates are given by the product limit estimator
\begin{align} \label{eq:kmcomp}
    \hat S(t)=\prodi_{[0,t]}\left(1-\frac{\sum_{i=1}^n  dN^D_i(s)}{\sum_{i=1}^nY_i(s)}\right)=\prodi_{[0,t]}\left(1-d\hat\Lambda^D(s)\right)
\end{align}
and
\begin{align} \label{eq:nacomp}
    d\hat\Lambda^R_{E}(t)=\sum_{i=1}^n \frac{dN^R_i(t)-Y_i(t)d\Lambda^R_{P}(t;X_i)}{\sum_{j=1}^nY_j(t)},
\end{align}
respectively. The estimator (\ref{eq:nacomp}) mirrors the estimator of Ghosh \& Lin \parencite*{gosh2000} apart from our recurrent events component relating to the excess recurrent events. Inserting (\ref{eq:kmcomp}) and (\ref{eq:nacomp}) into (\ref{eq:estimand}) we obtain an estimate 
\begin{align}
\label{eq:estimate}
\hat{\mu}_E(t) = \int_0^t d\hat\Lambda_{E}^R(u)\hat S(u)    
\end{align}
of $\mu_E(t)$. It follows from theorem \ref{thrm:est} below and the Continuous Mapping Theorem that

\begin{align}
     \label{eq:unif}
     \sup_{t\in [0,\tau]}\left|\hat{\mu}_E(t)-\mu_E(t) \right| \conp0.
\end{align}
In other words, even though we simply plug in marginal estimates of $S(t)$ and $\Lambda_E^R(t)$ into (\ref{eq:estimand}) for $0\le t \le \tau$ in order to obtain (\ref{eq:estimate}) we still have uniform consistency in (\ref{eq:unif}). As a side note the convergence in probability in (\ref{eq:unif}) can be interpreted as ordinary probabilities and not in terms of outer probabilities (cf.\ \cite{vdv2023}) since 

$\hat{\mu}_E(t)$ and $\mu_E(t)$ are both measurable functions of $t$ on the compact set $t\in[0,\tau]$.

Before presenting the three theorems that comprise the theoretical part of this paper we first make some additional definitions and assumptions. Define $Y(t):=\frac{1}{n}\sum_{i=1}^nY_i(t)$, $N^R(t):=\frac{1}{n}\sum_{i=1}^nN_i^R(t)$, $N^D(t):=\frac{1}{n}\sum_{i=1}^n N_i^D(t)$, $dN^R_P(t):=\frac{1}{n}\sum_{i=1}^nY_i(t)d\Lambda^R_{P}(t;X_i)$, and $N_P^R(t)=\int_0^t dN_P^R(u)$. These four processes are the building blocks of $\hat{\mu}_E=\{\hat{\mu}_E(t); \, 0\le t \le \tau\}$ in (\ref{eq:estimate}). To ensure weak convergence of these four processes and $\hat{\mu}_E$, we make the following assumptions:
\begin{enumerate}
    \item $P(N_i^R(\tau)\leq K_1)=1$ for some $K_1<\infty$.\label{as:nrbound}
    \item $P[\Lambda^R_{P}(\tau;X_i)<K_2]=1$ for some $K_2<\infty$. \label{as:lrbound}
        \item $\Lambda^R_P(t;X_i)$ is \textit{càdlàg} in $t\in[0,\tau]$. \label{as:lrcadlag}
    \item $P(Y_i(\tau)>0)\ge\varepsilon >0$.
    \label{as:ybound} 
    \item $\Lambda^D(t)$ is continuous in $t\in[0,\tau]$ \label{as:ldcont}
\end{enumerate}

The following three theorems make up the main theoretical contribution of this paper. Detailed proofs can be found in the appendix. The first theorem relates to the weak convergence of the four processes that are the building blocks of $\hat{\mu}_E$. 
\begin{theorem} \label{thrm:comp}
    Assume conditions 1-\ref{as:lrcadlag} above. The right continuous version of the process $\{\sqrt{n}(N^R(t)-E[N^R(t)],N^D(t)-E[N^D(t)],N^R_P(t)-E[N^R_P(t)],Y(t)-E[Y(t)]); \,  0 \leq t \leq \tau\}$ then converges weakly as $n\to\infty$ to a multivariate zero-mean Gaussian process $\{U(t); \, 0\le t\le\tau\}$ on the metric space $(\mathbb{D}^4[0,\tau],||\cdot||_{\infty})$ of four-dimensional càdlàg functions on $[0,\tau]$, equipped with the supremum norm.
\end{theorem}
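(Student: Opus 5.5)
The plan is to treat each of the four coordinates as an empirical process indexed by $t\in[0,\tau]$, show that the corresponding function class is Donsker, and then pass to the joint (four-dimensional) statement. Write $P_n$ for the empirical measure of the iid observations $O_i=(N_i^{R*},N_i^{D*},C_i,X_i,Z_i)$. Each coordinate can then be written as $\sqrt n(P_n-P)f_t$. The coordinates are $f^{R}_t(O)=N^R(t)$, $f^D_t(O)=N^D(t)$, $f^P_t(O)=\int_0^t Y(u)\,d\Lambda^R_P(u;X)$ and $f^Y_t(O)=Y(t)$. The four classes are $\mathcal F_k=\{f^k_t:t\in[0,\tau]\}$, $k=1,\dots,4$. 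Weak convergence in $(\mathbb D^4[0,\tau],\|\cdot\|_\infty)$ to a tight zero-mean Gaussian limit follows once the union class $\mathcal F=\bigcup_k\mathcal F_k$ is shown to be Donsker. Indeed, $\ell^\infty(\mathcal F)$ embeds the product space, and finite-dimensional convergence of the joint vector is just the multivariate CLT. The CLT applies because all functions are bounded: by assumption 1 for $N^R$, trivially for $N^D$ and $Y$, and by assumption 2 for $N^R_P$, since $0\le f^P_t\le \Lambda^R_P(\tau;X)<K_2$. A finite union of Donsker classes is Donsker, so it suffices to treat each $\mathcal F_k$ separately.

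The key observation is monotonicity in $t$ for each fixed $O$. The maps $t\mapsto N^R(t)$, $N^D(t)$ and $N^R_P(t)$ are nondecreasing and $t\mapsto Y(t)$ is nonincreasing. Each class is uniformly bounded, with envelopes $K_1$, $1$, $K_2$, $1$ respectively. A uniformly bounded class of functions that is monotone in its index is Donsker: one can build $\varepsilon$-brackets in $L_2(P)$ using the deterministic monotone function $t\mapsto E f_t$. For a bounded nondecreasing family $f_t$ with $0\le f_t\le K$, choose grid points $t_0<\dots<t_m$ such that $E[f_{t_j-}-f_{t_{j-1}}]\le\varepsilon^2/K$. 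Then $[f_{t_{j-1}},f_{t_j-}]$ is a bracket of $L_2$ size at most $\varepsilon$, because $E(g-h)^2\le K\,E(g-h)$ for $0\le g-h\le K$. At most $O(K^2/\varepsilon^2)$ such brackets are needed. The bracketing integral $\int_0^1\sqrt{\log N_{[\,]}(\varepsilon)}\,d\varepsilon$ is therefore finite, and the bracketing CLT (van der Vaart \& Wellner, Thm 2.5.6) gives the Donsker property.

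I expect the main obstacle to be technical rather than conceptual: handling the càdlàg and right-continuity issues in the bracketing, and the measurability of $\mathbb D$-valued limits. The bracket grid must use left limits $f_{t-}$ at jump points of $E f_t$, which is why assumption 3 (càdlàg $\Lambda^R_P$) is needed. It ensures that $N^R_P$, and hence $E N^R_P$, are càdlàg, so left limits exist and the grid construction terminates. For $Y(t)=1[\min(C,D)\ge t]$ the process is left-continuous rather than right-continuous, which explains why the theorem is stated for the right-continuous version. I would replace $Y$ by $Y(t+)$ and note that the two differ only on a set of $t$ on which the sup-norm limit is unaffected. The brackets above work equally for $Y(t+)$, since it is still monotone. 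Finally, the limit $U$ has continuous-in-$\rho_P$ sample paths by the Donsker theorem. Because every element of $\mathcal F$ is a càdlàg process in $t$, the tight limit is concentrated on a separable subset of $\ell^\infty$. This makes it a Borel-measurable Gaussian element of $\mathbb D^4[0,\tau]$, as claimed.
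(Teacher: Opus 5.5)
Your proof is correct and uses the same core mechanism as the paper. The finite-dimensional limits come from the multivariate CLT, and tightness comes from the summands being uniformly bounded and monotone in $t$. The difference lies in how the pieces are assembled. The paper cites Example 2.11.16 of van der Vaart \& Wellner to get asymptotic tightness of each coordinate $U_{nj}$ separately. It then builds joint convergence from marginal tightness plus Cram\'er--Wold for the finite-dimensional laws. You instead prove the monotone bracketing bound $N_{[\,]}(\varepsilon,\mathcal F_k,L_2(P))=O(\varepsilon^{-2})$ yourself and apply the bracketing CLT once to the union class $\mathcal F$. This gives joint weak convergence and a tight Gaussian limit in one step. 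It also makes explicit where the envelopes from assumptions 1--2 and the monotonicity enter. The paper's route is shorter because the bracketing work is delegated to the citation.

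Three side remarks in your write-up should be tightened, although none affects the conclusion.

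First, separability of the support does not by itself place the limit in $\mathbb{D}^4[0,\tau]$. The correct link is that $t\mapsto f_t$ is right-continuous with left limits in $L_2(P)$, by bounded convergence. Hence the $\rho_P$-uniformly continuous sample paths of the limit are c\`adl\`ag in $t$, and $\mathbb{D}^4[0,\tau]$ is closed in $(\ell^\infty[0,\tau])^4$.

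Second, left limits of a monotone family exist automatically, so that is not what assumption 3 is for. Assumption 3 is needed so that $N^R_P$ has c\`adl\`ag paths, i.e.\ so that $U_{n3}$ lies in $\mathbb{D}[0,\tau]$. Separately, right-continuity of $t\mapsto Ef_t$ is what makes the greedy grid $t_j=\inf\{t>t_{j-1}: Ef_t-Ef_{t_{j-1}}>\varepsilon^2/K\}$ advance $Ef$ by at least $\varepsilon^2/K$ per step.

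Third, the averaged process $Y$ and its right-continuous version do not in general have the same sup-norm limit. Take an atom $a$ of $\min(C,D)$, as arises with the integer-day times in the paper's simulations. Put $p_a=P(\min(C,D)=a)$ and $\hat p_a=n^{-1}\#\{i:\min(C_i,D_i)=a\}$. Then $\sqrt{n}[(Y(a)-EY(a))-(Y(a+)-EY(a+))]=\sqrt{n}(\hat p_a-p_a)$, which is asymptotically a nondegenerate normal variable. Since the theorem is stated for the right-continuous version, you can simply drop that claim.
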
 
The second theorem contains the main result of the paper. We prove this theorem based on theorem \ref{thrm:comp} using the functional delta method (see for instance \cite{vdv2023}). It is possible to prove theorem \ref{thrm:est} using direct calculations, although one would then miss out on the elegance of a functional approach. 
\begin{theorem} \label{thrm:est}
    Assume conditions 1-\ref{as:ybound} above. Then as $n\rightarrow\infty$ the process $\{\sqrt{n}(\hat\mu_E(t)-\mu_E(t));0\leq t \leq\tau\}$ converges weakly to a zero mean Gaussian process $W=\{W(t); \, 0\le t \le \tau\}$, with covariance function $v(s,t)=\mbox{Cov}[W(s),W(t)]$, on $(\mathbb{D}[0,\tau],||.||_\infty)$.
\end{theorem}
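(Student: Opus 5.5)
We will derive Theorem \ref{thrm:est} from Theorem \ref{thrm:comp} by the functional delta method. The plan is to write $\hat\mu_E=\phi(N^R,N^D,N_P^R,Y)$ for a fixed map $\phi$ on a suitable subset of $\mathbb{D}^4[0,\tau]$, and to check that $\mu_E=\phi(E[N^R],E[N^D],E[N_P^R],E[Y])$. We then show that $\phi$ is Hadamard differentiable at this mean point, tangentially to the support of the limit $U$. Concretely, $\phi$ is the composition of three maps. The first is $(A,B)\mapsto \int_0^{\cdot} dA/B$ applied to $(N^D,Y)$ and to $(N^R-N_P^R,Y)$. This gives $\hat\Lambda^D$ and $\hat\Lambda_E^R$, since $d\hat\Lambda^R_E=(dN^R-dN^R_P)/Y$. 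The second is the product integral $\Lambda\mapsto\prodi_{[0,\cdot]}(1-d\Lambda)$, applied to $\hat\Lambda^D$. The third is the integration map $(\Lambda,S)\mapsto\int_0^{\cdot}S\,d\Lambda$. For the means, $E[dN^R(t)]=E[Y_i(t)d\Lambda^R(t;X_i,Z_i)]$ uses independent censoring and the intensity model, and similarly for $N^D$ and $N_P^R$. With the definitions of $d\Lambda_E^R(t)$ and $d\Lambda^D(t)$ given above, this yields $E[dN^R-dN^R_P]/E[Y]=d\Lambda_E^R$ and $\prodi(1-dE[N^D]/E[Y])=S$.

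The key steps, in order, are as follows.

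\textbf{Step (i).} Restrict to the domain where $B\ge\varepsilon/2$ on $[0,\tau]$. Condition \ref{as:ybound} together with monotonicity of $Y_i$ gives $E[Y(t)]\ge\varepsilon$, and uniform convergence of $Y$ (from Theorem \ref{thrm:comp}) shows the restriction holds with probability tending to one. On this domain, $B\mapsto1/B$ is Hadamard differentiable with derivative $h\mapsto -h/B^2$.

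\textbf{Step (ii).} Use Hadamard differentiability of $(A,B)\mapsto\int_0^{\cdot} B\,dA$ on $\mathbb{D}[0,\tau]\times BV_M[0,\tau]$, at points where $A$ has bounded variation (Lemma 20.10 in van der Vaart). The derivative is $(\alpha,\beta)\mapsto\int B\,d\alpha+\int\beta\, dA$, where $\int B\,d\alpha$ is defined by integration by parts. This requires all relevant processes to lie in a ball of bounded variation. That holds by conditions \ref{as:nrbound} and \ref{as:lrbound}: $N^R\le K_1$, $N^D\le1$ and $N_P^R\le K_2$, all nondecreasing. Condition \ref{as:lrcadlag} guarantees càdlàg paths.

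\textbf{Step (iii).} Use Hadamard differentiability of the product integral on bounded-variation functions (Gill and Johansen), which handles $\hat S$.

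\textbf{Step (iv).} Apply the chain rule and the delta method. This gives $\sqrt n(\hat\mu_E-\mu_E)\rightsquigarrow \phi'(U)=:W$. Since $\phi'$ is continuous and linear and $U$ is a zero-mean Gaussian process, $W$ is zero-mean Gaussian with some covariance $v(s,t)$. Its explicit form is left for the next theorem.

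The main obstacle is the integration step, $\int \hat S\,d\hat\Lambda^R_E$ and $\int dA/B$. Here the integrator perturbations $\sqrt n(N^R-EN^R)$ are not of bounded variation uniformly in $n$, so Hadamard differentiability holds only in the restricted sense of van der Vaart's Lemma 20.10. We must therefore check its hypotheses: the perturbed integrators stay in a BV ball of radius $K_1+K_2+1$, and the integrand limit is handled via integration by parts. A second point to watch is that $U$ need not have continuous paths, because the recurrent events and $\Lambda_P^R$ may have atoms. We would therefore verify differentiability tangentially to $\mathbb{D}^4$ rather than to $C^4$, using the fact that the integration-by-parts form of the derivative is still well defined and continuous in the sup norm for càdlàg $\alpha$. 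Measurability is settled by noting that all processes are measurable on compact $[0,\tau]$, so the delta method in $(\mathbb{D}[0,\tau],\|\cdot\|_\infty)$ applies.
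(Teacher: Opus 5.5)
Your proposal is correct and follows essentially the same route as the paper. You write $\hat\mu_E$ as the composition $(A,B,G,D)\mapsto(A-B,G,1/D)\mapsto(\int d(A-B)/D,\int dG/D)\mapsto(\cdot,\prodi(1-dG/D))\mapsto\int\prodi(1-dG/D)\,d(A-B)/D$, use Hadamard (compact) differentiability of each piece (Gill--Johansen for the product integral), and apply the functional delta method to the Gaussian limit $U$ of Theorem~\ref{thrm:comp}.

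You are more explicit than the paper about the domain restrictions it leaves implicit: bounded-variation balls for the integrators, and $\inf_t Y(t)\ge\varepsilon/2$ with probability tending to one. One small slip: in Step (ii) it is the integrator that must range over $BV_M$ and the integrand that must be of bounded variation at the base point, the reverse of what you wrote. The paper, for its part, also writes out the derivative map and the explicit limit $W$ in terms of $M^R$ and $M^D$, which you defer to Theorem~\ref{thrm:cov}.
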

Our last theorem follows more or less directly from the proof of theorem \ref{thrm:est} together with some additional arguments regarding consistency. The result is almost the same as that presented in Ghosh \& Lin \parencite*{gosh2000} with some differences related to the relative recurrence estimator. 

\begin{theorem} \label{thrm:cov}
    Assuming conditions 1-\ref{as:ldcont}, the covariance function $\sigma^2(s,t)=\sigma_n^2(s,t)\\=\mbox{Cov}[\hat{\mu}_E(s),\hat{\mu}_E(t)]$ of the estimator $\hat{\mu}_E=\{\hat{\mu}_E(t);\, 0\le t \le \tau\}$ can be approximated by $\sum_{i=1}^n E[\Phi_i(s)\Phi_i(t)]/n^2$, where the terms $\Phi_i(t)$ are defined in the proof. This gives rise to an estimate 
    \begin{align}
        \label{eq:hsi}
        \hat  \sigma^2(s,t)&=\frac{1}{n^2}\sum_{i=1}^n\hat\Phi_i(s)\hat\Phi_i(t)
\end{align}
    of the covariance function, with
    \begin{align}
        \label{eq:hPhii}
        \hat \Phi_i(t)&=\int^t_0 \hat S(u)\frac{dN_i^R(u)-Y_i(u)d(\Lambda^R_{P}(u;X_i)+\hat \Lambda^R_{E}(u))}{Y(u)}\nonumber\\ &- \hat\mu_{E}(t)\int_0^t\frac{dN_i^D(u)-Y_i(u)d\hat\Lambda^D(u))}{Y(u)} \\
        &+\int_0^t\hat\mu_{E}(u)\frac{dN_i^D(u)-Y_i(u)d\hat\Lambda^D(u)}{Y(u)}.\nonumber
\end{align}
Moreover, $\hat{v}(s,t)=n\hat{\sigma}^2(s,t)$ is a consistent estimator of the asymptotic covariance function $v(s,t)$ of the Gaussian process $W$ in theorem \ref{thrm:est} as $n\to\infty$. The convergence is uniform for all $(s,t)$ in the sense that
\begin{align}
\label{eq:hvv}
\max_{0\le s,t \le \tau} |\hat{v}(s,t)-v(s,t)| \conp0
\end{align}
as $n\to\infty$. 
\end{theorem}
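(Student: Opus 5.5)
The plan is to derive an asymptotically linear (i.i.d.) representation of $\sqrt{n}(\hat\mu_E-\mu_E)$ by making the functional derivative from the proof of theorem \ref{thrm:est} explicit, read off $\Phi_i$ from it, and then show that the plug-in $\hat\Phi_i$ are uniformly close to $\Phi_i$.

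\emph{Step 1: linearisation.} Write $\mu_E=\phi(E N^R,E N^D,E N^R_P,E Y)$ and $\hat\mu_E=\phi(N^R,N^D,N^R_P,Y)$. The map $\phi$ first forms $\Lambda_E^R=\int (dN^R-dN^R_P)/Y$ and $\Lambda^D=\int dN^D/Y$, then the product integral $S=\prodi(1-d\Lambda^D)$, and finally $\int S\,d\Lambda_E^R$. Each of these maps is Hadamard differentiable (by condition 4, $Y$ is bounded away from zero on $[0,\tau]$). Applying the chain rule, together with the derivative of the product integral $dS\mapsto -S\int dh/(1-\Delta\Lambda^D)$ (which simplifies under continuity, condition \ref{as:ldcont}), gives
\[
\sqrt n(\hat\mu_E-\mu_E)(t)=\frac{1}{\sqrt n}\sum_{i=1}^n\Phi_i(t)+o_P(1)
\]
uniformly in $t$. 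Here
\[
\Phi_i(t)=\int_0^t S\,\frac{dM_i^R}{C S}-\mu_E(t)\int_0^t\frac{dM_i^D}{CS}+\int_0^t\mu_E\,\frac{dM_i^D}{CS},
\]
with $dM_i^R=dN_i^R-Y_i\,d(\Lambda^R_P(\cdot;X_i)+\Lambda^R_E)$ and $dM_i^D=dN_i^D-Y_i\,d\Lambda^D$. The last two terms come from integrating by parts the term $\int d\Lambda_E^R(u)\,S(u)\int_0^u dM^D/(CS)$. Since the $\Phi_i$ are i.i.d. with mean zero, $v(s,t)=E[\Phi_1(s)\Phi_1(t)]$ and $\sigma_n^2(s,t)\approx\sum_i E[\Phi_i(s)\Phi_i(t)]/n^2$, which gives the first claim.

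\emph{Step 2: consistency.} The estimator $\hat\Phi_i$ is obtained from $\Phi_i$ by replacing $S$, $\Lambda_E^R$, $\Lambda^D$, $\mu_E$ and $C S$ with $\hat S$, $\hat\Lambda^R_E$, $\hat\Lambda^D$, $\hat\mu_E$ and $Y$. By theorem \ref{thrm:comp}, theorem \ref{thrm:est} and the continuous mapping theorem, all of these replacements converge uniformly on $[0,\tau]$ in probability. I would decompose
\[
\hat v(s,t)-v(s,t)=\frac1n\sum_i\bigl(\hat\Phi_i(s)\hat\Phi_i(t)-\Phi_i(s)\Phi_i(t)\bigr)+\Bigl(\frac1n\sum_i\Phi_i(s)\Phi_i(t)-v(s,t)\Bigr).
\]
\emph{Second term.} The class $\{\Phi(s)\Phi(t)\}$ is built from finitely many monotone bounded processes (conditions 1, 2 and 4), so it is Glivenko–Cantelli. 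The uniform law of large numbers then handles this term.

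\emph{First term.} I bound it by $\max_i\sup_t|\hat\Phi_i-\Phi_i|\cdot\frac1n\sum_i\sup_t(|\hat\Phi_i|+|\Phi_i|)$. The second factor is $O_P(1)$ by the boundedness assumptions.

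\emph{Main obstacle.} The hard part is showing $\max_i\sup_t|\hat\Phi_i(t)-\Phi_i(t)|\conp 0$, because each $\hat\Phi_i$ contains integrals against individual counting processes with random estimated integrands. For terms like $\int(\hat S/Y-S/(CS))\,dN_i^R$, I would bound the integrand's sup-norm times the total variation $N_i^R(\tau)\le K_1$, which tends to zero uniformly in $i$. Terms like $\int Y_i\,\hat S\,d(\hat\Lambda_E^R-\Lambda_E^R)/Y$ involve integrating against a difference of measures, so I would handle them by integration by parts: $Y_i$ and $\hat S$ are monotone with bounded variation, which turns them into sup-norms of $\hat\Lambda-\Lambda$ times variations bounded by constants. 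For the $\Lambda^R_P(\cdot;X_i)$ term, I would use $K_2$ from condition 2.
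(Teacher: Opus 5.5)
Your proposal follows the paper's proof: the same $\Phi_i$ from the delta-method linearization followed by integration by parts, the same split $\hat v - v = (\hat v - v_n) + (v_n - v)$, a uniform law of large numbers for $v_n - v$, and $\max_i \sup_t |\hat\Phi_i - \Phi_i| \to 0$ in probability for the remaining term. Your Glivenko--Cantelli argument and your explicit sup-norm times total-variation bounds (with integration by parts) spell out what the paper only asserts via an ``invariance principle'' and ``Theorem 1 plus the Continuous Mapping Theorem''. Identifying $v(s,t)=E[\Phi_1(s)\Phi_1(t)]$ through the CLT for $n^{-1/2}\sum_i\Phi_i$ also avoids the uniform-integrability step implicit in the paper's $\lim_n E[W_n(s)W_n(t)]$.
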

Theorem \ref{thrm:cov} allows for the construction of confidence intervals of $\mu_E(t)$, of the form 
$$
I(t)=\hat \mu_E(t) \pm z_{1-\alpha/2} \hat\sigma(t,t),
$$
where $z_{1-\alpha/2}$ is the $(1-\alpha/2)$-percentile from the standard normal distribution for the desired (nominal) significance level $\alpha$. It follows from theorem \ref{thrm:cov} that the asymptotic coverage probability of $I(t)$ as $n\to\infty$ is $1-\alpha$. 

It is worth noting that the estimator $\hat \mu_E(t)$ is not necessarily positive, making confidence intervals based on a log transform inadvisable. This might be seen as a downside of the estimator but we would argue that it is not. The estimator is simply comparing the number of recurrent events individuals have with what we expect based on the general population. Should this number be negative the interpretation is simply that the sampled individuals have fewer events than what we expected them to have i.e.\ some sort of recurrent events deficit. The multiplication with the estimated survival curve $\{\hat{S}(u); 0\le u \le \tau\}$ in the integrand of $\hat{\mu}_E(t)$ does not change the interpretation of the estimated relative recurrences in (\ref{eq:estimate}), since individuals after their death will always have zero expected number of recurrent events. 

\section{Simulation study}\label{sec:sim}
We investigated the finite sample properties of our estimator (\ref{eq:estimate}) by simulations. Anticipating application of the method using registry data where individuals are enrolled into a study at the date of diagnosis, we created simulations that mimic such a scenario. Computations were done in C++ and R through the Rcpp interface \parencite{rcpp2026}.

The data generating process is as follows. All processes were simulated in continuous time but event times were rounded to integers representing days, which reflects processes developing in continuous time but recorded in discrete time. Date of diagnosis was drawn from a uniform distribution taking values between the dates 2020-01-01 and 2023-12-31. End of study was set as 2024-12-31 for all individuals. The censoring variable $C_i$ was thus a deterministic function of the random date of diagnosis. Age at diagnosis was drawn from a uniform distribution ranging from 30 to 90 and then rounded, and sex was drawn from a Bernoulli distribution with probability 0.5. Together age, date of diagnosis, and sex form the random vector $X_i$ described in section \ref{sec:est}. Based on this vector, hospitalisation rates were extracted from tables detailing the number of hospitalisations per age, sex, and year from the National Board of Health and Welfare (\url{www.socialstyrelsen.se/en}). The hospitalisation rates were continuously updated whenever the calender year changed or the individual aged a year (with birthday taken to be date of diagnosis). A disease severity random effect was drawn from a mean zero normal distribution corresponding to the variable $Z_i$. 

Let $W_i$ be a column vector containing 1, the mean standardised age, and severity of individual $i$. The hazard rate was then given by $d\Lambda^D(t;W_i)=\exp(\beta_D'W_i)dt$ and the \textit{excess} recurrent events rate by $d\Lambda^R_E(t;W_i)=\exp(\beta_R'W_i)dt$, where prime refers to vector transposition and the column vectors $\beta_D$ and $\beta_R$ are given in table \ref{tab:simscen}. The total recurrent events rate $d\Lambda^R(t;W_i)$ was then obtained by adding the excess rate $d\Lambda^R_E(t;W_i)$ with the population rate $d\Lambda^R_P(t;X_i)$ taken from the aggregated data. 

Since the hazard rate is constant, time to death was generated from an exponential distribution with an individual specific rate $\beta_D^\prime W_i$. The generation of recurrent events times was more intricate due to the rates being piecewise constant. For each time interval $(t_{k-1},t_k)$ of constant rates for an individual $i$ the following algorithm was used:
\begin{enumerate}
    \item Generate the number of events $N_i(t_{k-1},t_k)$ in the interval from a Poisson distribution.
    \item If $N_i(t_{k-1},t_k)>0$ generate the sum of the event times from a truncated Gamma distribution.
    \item Generate exponential times based on the sum of event times as described in Lindqvist \& Taraldsen \cite{lindqvist2007}.
\end{enumerate}

The scenarios investigated can be found in table \ref{tab:simscen}. There $\beta_D$ and $\beta_R$ denote the coefficient vector of the hazard rate and excess recurrent events raterespectively, and $\sigma$ denotes the standard deviation of the severity random effect. The scenarios were investigated for the sample sizes $n=500$ and $n=2000$ and each scenario was repeated 5000 times per sample size. This should give a Monte Carlo error of around 0.3 percentage units, for the true coverage probability of confidence intervals with a nominal coverage of 95\%. True values were created by running through the scenarios with a sample size of 100,000. 

\begin{table}[h]
    \caption{Simulation scenarios. The column vectors $\beta_D$ and $\beta_R$ contain the coefficients of the models for the hazard rate and the \textit{excess} recurrent events rate respectively. The elements from left to right correspond to the intercept, the effect of mean standardised age, and the effect of severity. The parameter $\sigma$ is the standard deviation of the mean zero random effect representing severity.} 
    \label{tab:simscen} 
    \centering
    \begin{tabular}{c|ccc}
        \hline
        Scenario & $\beta_D'$&$\beta_R'$ &$\sigma$ \\
        \hline
        1 & (-7,0.01,0.25) & (-6,0.01,0.5) & 1 \\
        2 & (-7,0.01,0.25) & (-7,0.01,0.5) & 1 \\
        3 & (-8,0.01,0.25) & (-6,0.01,0.5) & 1 \\
        4 & (-8,0.01,0.25) & (-7,0.01,0.5) & 1 \\
        5 & (-7,0.01,0.25) & (-6,0.01,0.5) & 2 \\
        6 & (-7,0.01,0.25) & (-7,0.01,0.5) & 2 \\
        7 & (-8,0.01,0.25) & (-6,0.01,0.5) & 2 \\
        8 & (-8,0.01,0.25) & (-7,0.01,0.5) & 2 \\
        \hline
    \end{tabular}

\end{table}

The results can be found in tables \ref{tab:simres500} and \ref{tab:simres2000} (in the appendix). We see that the expected number of excess recurrent events estimator $\hat{\mu}_E(t)$, and its standard error $\hat{\sigma}(t,t)$, perform well in finite samples. The actual coverage of the confidence intervals $I(t)$ is generally close to their nominal coverage for both the smaller and the larger sample size, although some tendencies of overestimation of the standard deviation can be seen in a few scenarios.

\begin{table}[hbt]
   \caption{Simulation results with $n=500$. The columns $\mbox{SE}(\hat\mu_E(t))$ and $\mbox{SD}(\hat\mu_E(t))$ represent the mean of the standard errors $\hat{\sigma}(t,t)$ and the mean of the true standard deviations $\sigma(t,t)$ respectively,
   over 5000 replicates. 
   The rightmost column $1-\alpha_{\mbox{\scriptsize true}}$ represents the actual coverage of confidence intervals $I(t)$ with nominal coverage $1-\alpha=95\%$. Time is given in days.}
\centering
\begin{tabular}{ccc|cccc}
  \hline
 Scenario & Time $t$ & $\mu_E(t)$ & $\hat\mu_E(t)$ & $\mbox{SE}(\hat\mu_E(t))$ & $\mbox{SD}(\hat\mu_E(t))$ & $1-\alpha{\mbox{\scriptsize true}}$ \\ 
  \hline
1 & 182 & 0.51 & 0.52 & 0.04 & 0.04 & 0.95 \\ 
   & 365 & 0.85 & 0.85 & 0.06 & 0.05 & 0.95 \\ 
   & 730 & 1.42 & 1.42 & 0.09 & 0.08 & 0.98  \\ \hline
  2 & 182 & 0.19 & 0.19 & 0.03 & 0.02 & 0.95 \\ 
   & 365 & 0.31 & 0.32 & 0.04 & 0.03 & 0.96 \\
   & 730 & 0.53 & 0.53 & 0.06 & 0.05 & 0.98 \\  \hline
  3 & 182 & 0.55 & 0.55 & 0.04 & 0.04 & 0.94 \\ 
   & 365 & 0.96 & 0.96 & 0.06 & 0.06 & 0.94 \\ 
   & 730 & 1.78 & 1.78 & 0.09 & 0.09 & 0.96 \\  \hline
  4 & 182 & 0.20 & 0.20 & 0.03 & 0.03 & 0.94 \\ 
   & 365 & 0.35 & 0.36 & 0.04 & 0.04 & 0.95 \\  
   & 730 & 0.65 & 0.66 & 0.06 & 0.05 & 0.96 \\ \hline
  5 & 182 & 0.70 & 0.71 & 0.06 & 0.06 & 0.95 \\ 
   & 365 & 1.11 & 1.12 & 0.09 & 0.08 & 0.95 \\ 
   & 730 & 1.73 & 1.76 & 0.14 & 0.12 & 0.97 \\ \hline
  6 & 182 & 0.26 & 0.26 & 0.03 & 0.03 & 0.96 \\ 
   & 365 & 0.41 & 0.42 & 0.04 & 0.04 & 0.96 \\ 
   & 730 & 0.65 & 0.65 & 0.07 & 0.06 & 0.97 \\ \hline
  7 & 182 & 0.78 & 0.79 & 0.06 & 0.06 & 0.94 \\ 
   & 365 & 1.33 & 1.34 & 0.10 & 0.10 & 0.94 \\ 
   & 730 & 2.37 & 2.39 & 0.17 & 0.16 & 0.95 \\  \hline
  8 & 182 & 0.29 & 0.29 & 0.03 & 0.03 & 0.95 \\ 
   & 365 & 0.49 & 0.50 & 0.05 & 0.05 & 0.94 \\ 
   & 730 & 0.89 & 0.89 & 0.08 & 0.08 & 0.95 \\  \hline
\end{tabular}
 \label{tab:simres500}
\end{table}

\section{Application} \label{sec:app}
We illustrate the usage of the relative recurrent events estimator using data from CRCBaSe, a Swedish registry containing data on patients with colon or rectal cancer. This database contains data on around 77,000 patients diagnosed with colon or rectal cancer taken from SCRCR, the national quality registry for colorectal cancer. This information is supplemented with data from other registries, including the national in-patient registry wherein data on hospital admissions can be found. For more information on CRCBaSe and its data sources see Weibull et al. \parencite*{Weibull2023}. 

From CRCBaSe we extracted a cohort of rectal patients diagnosed between 2015-01-01 and 2024-12-31. The cohort was restricted to those with no prior colorectal cancer diagnosis. We included patients between 18 and 69 years old at diagnosis with pathologic cancer stage 3 or 4, with the a priori belief that these patients would be burdened by many hospitalisations. 

Time to death or censoring was defined as the number of days from diagnosis to the date of death or end of study (2024-12-31). This implicitly assumed no loss to follow-up, although since this would only occur due to migration, the impact of this assumption is likely to be minor. Time of hospitalisation was defined similarly, with censoring at time of death or end of study. To avoid counting movements between hospitals or hospital wards as new events, we only counted admission records that happened at least 1 days from the last discharge. 

Population hospitalisation rates were again taken from the aggregated data compiled by the National Board of Health and Welfare. 

In total 1749 patients fulfilled the inclusion criteria. Out of them $35.5\%$ were female. The median age at diagnosis was 62 and the 1st and 3rd quartiles were 55 and 66 indicating a fairly low spread of ages. The percentage of patients with stage 3 was $24.5\%$. 

The estimated number $\hat{\mu}_E(t)$ of excess hospitalisations is displayed in figure \ref{fig:exchosp} as a function of $t$ (given in days). At the end of study the estimated number of excess hospitalisations was $4.87$, with a 95\% confidence interval $(4.02, 5.73)$. This indicates a substantial burden on both patients and the healthcare system. In figure \ref{fig:exchosp} we see an  accumulation of excess events that decreases in speed as time passes. The decrease is of no surprise, due to the high mortality of this patient group, with $90\%$ dying before censoring.

While the number of excess hospitalisations is high, we must keep in mind that some hospitalisations could be records related to cancer treatment (e.g.\ surgery) and thus not a purely negative event. Treatment related admissions are especially likely during the first year since diagnosis. We can therefore conclude that there is a substantial burden of disease in this patient group and that earlier detection would be of immense value both to society and patients.

\begin{figure}[h]
    \centering
    \includegraphics[alt={Graph of the number of cumulative number of excess hospitalisations from diagnosis up to 5 years. The number increases with a diminishing rate.},width=1\linewidth]{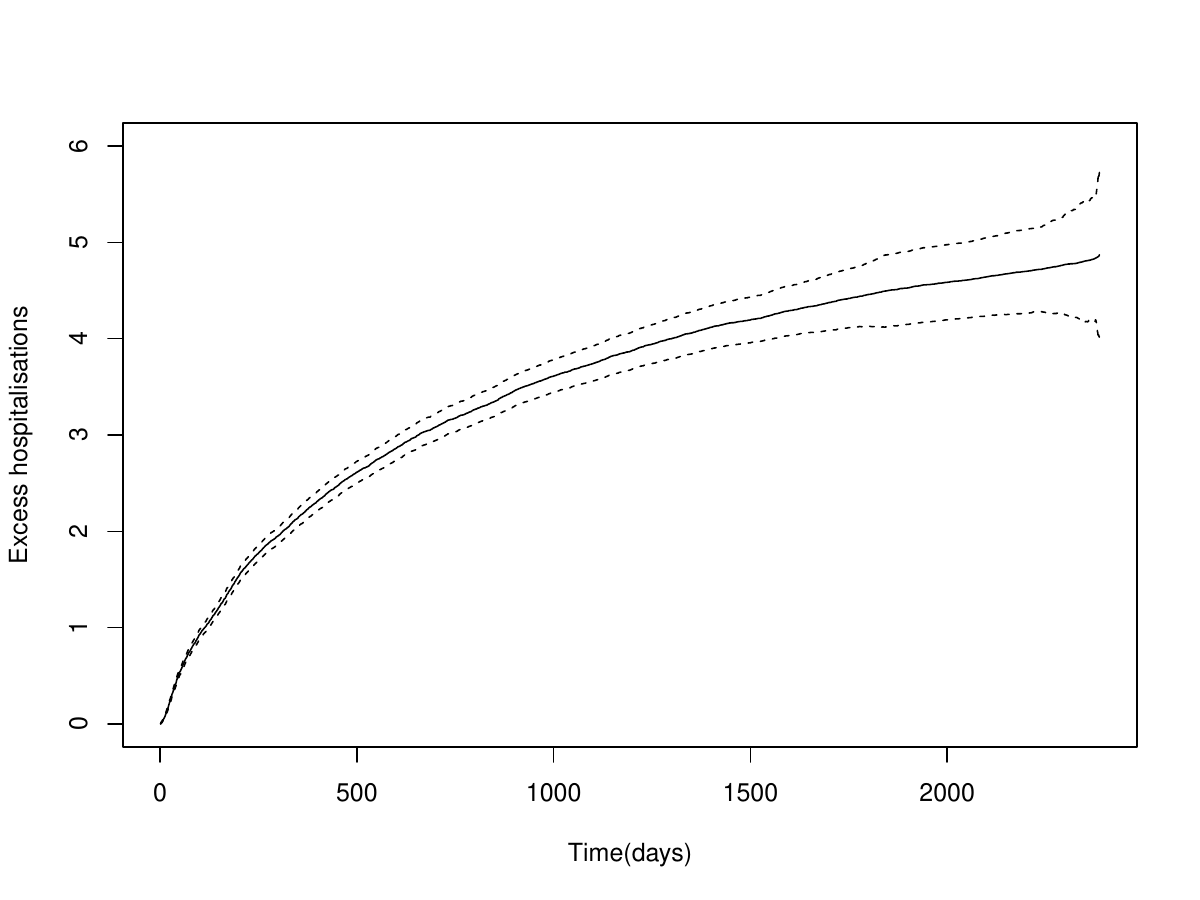}
    \caption{The estimated number $\hat{\mu}_E(t)$ of excess hospitalisations in patients with stage 3-4 rectal cancer, as a function of $t$ (solid line) and the accompanying confidence intervals $I(t)$ with a nominal coverage of $95\%$ (dashed lines).}
    \label{fig:exchosp}
\end{figure}

\section{Discussion}\label{sec:disc}
We have extended the concept of relative survival to the field of recurrent events by presenting a non-parametric estimator for the excess number of recurrent events. This estimator allows for the estimation of the societal burden e.g.\ in terms of hospitalisations that a particular disease causes. We have furthermore proven weak convergence of the estimator as a functional of stochastic processes and investigated its finite sample properties through simulations. Lastly, we illustrated how the estimator might be used in Swedish data on patients with rectal cancer.

It is our belief that this estimator will prove useful in quantifying the burden of illness of different diseases. In particular, the estimator is highly relevant for those diseases that are characterised both by a large number of some recurrent events and high mortality. Examples of such diseases include certain cancer types, as illustrated here, heart failure, and chronic pulmonary disease.

There are certain aspects regarding the interpretation of our estimator that are worth some discussion. Our derivations are based on a marginal interpretation of the estimator, i.e.\ averaging over both disease related and non-disease related patient characteristics. This provides a reasonable estimate of the total burden of disease which is extremely valuable to researchers in public health and health economics. It is perhaps of lesser value to individual patients and for decision making by physicians, since we would then have at least partial information on the characteristics of the particular patient. This can be partially remedied through stratification, e.g.\ on age and sex, provided that the sample size in each stratum is large enough. Regression modelling might however be preferable.

Another aspect related to the interpretation our estimator is that it does not produce a causal estimate in the sense of what would happen should one be able to isolate the impact of the disease. The reason being that overall survival is part of the estimand. This is a conscious choice on our behalf, since we believe that there is greater benefit in being able to quantifying something that is fully interpretable in the real world - the excess number of events in comparison to the general population taking death into account. It is nonetheless certainly possible to replace the product limit estimator (\ref{eq:kmcomp}) in (\ref{eq:estimate}) with a suitable estimator of relative survival. This could under certain circumstances allow for the isolation of both excess recurrent events and excess mortality. Weak convergence could then be proven in a similar way to what is presented in this article.  

\printbibliography

\appendix
\section{Appendix} 
\subsection{Proof of theorem 1}
\begin{proof} 
     It is convenient to introduce the process $U_n = \{U_n(t); \, 0\le t\le\tau\}$, with
     \begin{align}
     \label{eq:Un}
     U_n(t) &= \left[\sqrt{n}(N^R(t)-E(N^R(t)),\sqrt{n}(N^D(t)-E(N^D(t))), \right.\nonumber\\
     &= \left. \sqrt{n}(N_P^R(t)-E(N_P^R(t)),\sqrt{n}[Y(t)-E(Y(t))]\right]\\
     &=: (U_{n1}(t),U_{n2}(t),U_{n3}(t),U_{n4}(t)).\nonumber
     \end{align}
     Note that only the first three components of $U_n$ are \textit{càdlàg} (right continuous with left-hand limits), whereas the last component $U_{n4}=\{U_{n4}(t);\, 0\le t \le\tau\}$ is left continuous with right-hand limits, with a finite number of discontinuities. Without loss of generality we assume that $U_{n4}$ is modified at these discontinuity points, so that $U_{n4}$ and $U_n$ are random elements of $(\mathbb{D}[0,\tau],||\cdot||_{\infty})$ and $(\mathbb{D}^4[0,\tau],||\cdot||_{\infty})$ respectively.
     Since the processes $N^R(t)=\sum_{i=1}^n N_i^R(t)/n$, $N^D(t)=\sum_{i=1}^n N_i^D(t)/n$, $N_P^R(t)=\sum_{i=1}^n \int_0^t Y_i(u)d\Lambda_P^R(u;X_i)/n$ and $Y(t)=\sum_{i=1}^n Y_i(t)/n$ are averages of iid terms, convergence of finite-dimen\-sional distributions of each $U_{nj}=\{U_{nj}(t);\, 0\le t \le \tau\}$ to multivariate normal distributions is a consequence of the Central Limit Theorem.   
     It follows from conditions 1-\ref{as:lrcadlag} that the terms $N_i^R(t)$, $N_i^D(t)$, and $\int_0^t Y_i(u)d\Lambda_P^R(u;X_i)$ of the first three processes are monotone increasing and \textit{càdlàg}  in $t$ and uniformly bounded on $[0,\tau]$ with probability 1. Likewise, the version of $-Y_i(t)$ that is modified at discontinuity points is also monotone increasing and \textit{càdlàg}. 
     Invoking Example 2.11.16 of van der Vaart \& Wellner \parencite*{vdv2023} we conclude that $\{U_{nj}\}_{n\ge 1}$ are asymptotically tight processes on $(\mathbb{D}[0,\tau],||\cdot||_{\infty})$ for $j=1,2,3,4$. Together with the above mentioned convergence of finite-dimensional distributions, this proves that each $U_{nj}$ converges weakly on $(\mathbb{D}[0,\tau],||\cdot||_{\infty})$ as $n\to\infty$ to a zero mean Gaussian processes $U_j=\{U_j(t);\, 0\le t \le \tau\}$ for $j=1,2,3,4$. Joint weak convergence of all four processes $U_n=(U_{n1},U_{n2},U_{n3},U_{n4})$ on $(\mathbb{D}^4[0,\tau],||\cdot||_{\infty})$, as $n\to\infty$, to a four-dimensional Gaussian process $U=\{U(t)=(U_1(t),U_2(t),U_3(t),U_4(t); \, 0\le t \le\tau\}$
     follows from an application of the Cramér-Wold device.
\end{proof}
\subsection{Proof of theorem 2}
\begin{proof}
    Assuming a multiplicative intensity model we have the following expectations of the four processes involved in Theorem \ref{thrm:comp}, due to our assumption of an iid sample, and $Y_i(t)$ only taking values in $\{0,1\}$:
    \begin{align*}
     E[dN^R(t)] &= \frac{1}{n}\sum_{i=1}^n E[E[dN_i^R(t)|Y_i(t),X_i,Z_i]]\\
     &= \frac{1}{n}\sum_{i=1}^n E[Y_i(t)\{d\Lambda_E^R(t;X_i,Z_i) + d\Lambda_P^R(t;X_i)\}]\\
     &= E[Y_i(t)\{d\Lambda_E^R(t;X_i,Z_i) + d\Lambda_P^R(t;X_i)\}]\\
     &= [d\Lambda_E^R(t)+d\Lambda_P^R(t)]C(t)S(t),\\
    E[dN^R_P(t)]&= \frac{1}{n}\sum_{i=1}^n E[Y_i(t)d\Lambda^R_{P}(t;X_i)] \\
    &= E[Y_i(t)d\Lambda^R_{P}(t;X_i)] \\
    &= d\Lambda_P^R(t)C(t)S(t),\\
    E[dN^D(t)]&= \frac{1}{n}\sum_{i=1}^n E[Y_i(t)d\Lambda^D(t;X_i,Z_i)],\\
    &= E[Y_i(t)]d\Lambda^D(t;X_i,Z_i)], \\
    &= d\Lambda^D(t)C(t)S(t),\\
    E[Y(t)] &= \frac{1}{n}\sum_{i=1}^n E[Y_i(t)]\\
    &= E[Y_i(t)]\\
    &= C(t)S(t).
   \end{align*}
  Letting $N^R$, $N^R_P$, $N^D$ and $Y$ be represented by $A$, $B$, $G$ and $D$ the estimator $\hat{\mu}_E=\{\hat{\mu}_E(t);\, 0\le t \le\tau\}$ consists of the following maps:
    \begin{align*}
        &(A,B,G,D) \\
        &\mapsto \left(A-B,G,\frac{1}{D}\right) \\
        &\mapsto \left(\int\frac{d(A-B)}{D},\int\frac{dG}{D} \right)\\
        &\mapsto \left(\int\frac{d(A-B)}{D}, \prodi \left(1-\frac{dG}{D}\right) \right)\\
        &\mapsto \int\frac{d(A-B)}{D}\prodi \left(1-\frac{dG}{D}\right)
    \end{align*}
    Replace $(A,B,G,D)$ with $(A_0+\alpha,B_0+\beta,G_0+\gamma,D_0+\delta)$ where the Greek letters represent departures of $A$, $B$, $G$ and $D$ from $A_0=E(A)$, $B_0=E(B)$, $G_0=E(G)$, and $D_0=E(D)$ respectively. The above maps are all compactly differentiable at $\alpha=\beta=\gamma=\delta=0$ with derivative maps:
    \begin{align*}
        &(\alpha,\beta,\gamma,\delta) \\
        &\mapsto \left(\alpha-\beta,\gamma,-\frac{\delta}{D_0^2}\right) \\
        &\mapsto \left(\int\frac{d(\alpha-\beta)}{D_0} -\int\frac{\delta}{D_0^2}d(A_0-B_0), \int\frac{d\gamma}{D_0} -\int\frac{\delta}{D_0^2}dG_0\right)\\
        &\mapsto \left(\int\frac{d(\alpha-\beta)}{D_0} -\int\frac{\delta}{D_0^2}d(A_0-B_0), \right.
        \\
        &\,\,\,\,\,\,\,\,\,\, \left. - \prodi \left(1-\frac{dG_0}{D_0}\right)\int\frac{1}{1-\frac{\Delta G_0}{D}} \left[\frac{d\gamma}{D_0} -\frac{\delta}{D_0^2}dG_0\right]\right) \\
        &\mapsto \int \prodi \left(1-\frac{dG_0}{D_0}\right)\left[\frac{d(\alpha-\beta)}{D_0} -\frac{\delta}{D_0^2}d(A_0-B_0)\right] \\
        &\,\,\,\,\,\,\,\,\,\, - \int \prodi\left(1-\frac{dG_0}{D_0}\right) \int\frac{1}{1-\frac{\Delta G_0}{D_0}} \left[\frac{d\gamma}{D_0} -\frac{\delta}{D_0^2}d G_0\right]\frac{d(A_0-B_0)}{D_0}
    \end{align*}
    where $\Delta X=\lim_{\varepsilon \rightarrow 0} X(t)-X(t-\epsilon)$. Naturally, if $X$ is (left) continuous then $\Delta X=0$. For a proof of compact differentiability of the product integral see the excellent exposition by Gill \& Johansen \parencite*{gill1990}.
    
   Note that the right-hand-side of the above displayed equation is a linearized approximation of $\hat{\mu}_E(t)-\mu(t)$ when $t$ is the upper limit of integration of all outer integrals. Multiplying this asymptotically accurate approximation of $\hat{\mu}_E(t)-\mu(t)$ with $\sqrt{n}$, and making use of $D_0(t)=E[Y(t)]=C(t)S(t)$, $dA_0(t)-dB_0(t)=d\Lambda_E^R(t)C(t)S(t)$, $dG_0(t)=d\Lambda^D(t)C(t)S(t)$, and $dG_0(t)/D_0(t)=d\Lambda^D(t)$, we have:
        \begin{align*}
            &W_n(t) = \sqrt{n}[\hat{\mu}_E(t)-\mu(t)] \\
            &\approx \sqrt{n}\int_0^t\prodi_{[0,u]} \left(1-d\Lambda^D(s)\right) \left[ \frac{dN^R(u)-dN_P^R(u) - C(u)S(u)d\Lambda^R_E(u)}{C(u)S(u)} \right. \\
            &- \left. \frac{Y(u)-C(u)S(u)}{[C(u)S(u)]^2}C(u)S(u)d\Lambda_E^R(u)\right] \\
            &-\sqrt{n}\int_0^t \prodi_{[0,u]} \left(1-d\Lambda^D(s)\right) \int_0^u\frac{1}{1-\Delta\Lambda^D(s)} \left[\frac{dN^D(s)- C(s)S(s)d\Lambda^D(s)}{C(s)S(s)}  \right. \\  &\left.-\frac{Y(s)-C(s)S(s)}{[C(s)S(s)]^2}C(s)S(s)d\Lambda^D(s)\right]\frac{C(u)S(u)d\Lambda_E^R(u)}{C(u)S(u)} \\
            &= \sqrt{n}\int_0^t\prodi_{[0,u]} \left(1-d\Lambda^D(s)\right) \left[ \frac{dN^R(u)-dN^R_P(u)-Y(u)d\Lambda^R_E(u)}{C(u)S(u)}\right. \\ &\left.- \int_0^u \frac{1}{1-\Delta\Lambda^D(s)} \frac{dN^D(s) - Y(s)d\Lambda^D(s)}{C(s)S(s)}d\Lambda_E^R(u)\right]\\
            &=\sqrt{n}\int_0^tS(u)\left[\frac{dM^R(u)}{C(u)S(u)} - \int_0^u \frac{1}{1-\Delta\Lambda^D(s)} \frac{dM^D(s)}{C(s)S(s)}d\Lambda_E^R(u)\right] \\
            &=:\sqrt{n}\Phi(t),
        \end{align*}
        where in the last step we introduced 
        \begin{align}
        \label{eq:dMR}
        dM^R(t) &= dN^R(t)-\frac{1}{n}\sum_{i=1}^n Y_i(t)d\Lambda^R_P(t;X_i)-Y(t)d\Lambda^R_E(t)\nonumber\\
        &= dN^R(t) - dN_P^R(t) - Y(t)d\Lambda^R_E(t)\nonumber\\
        &= d[N^R(t)-E(dN^R(t))] - [dN_P^R(t)-E(dN_P^R(t))]\\
        & - [Y(t)-E(Y(t))]d\Lambda^R_E(t)\nonumber\\
        & = [dU_{n1}(t) - dU_{n3}(t) - U_{n4}(t)d\Lambda^R_E(t)]/\sqrt{n}
        \nonumber        
        \end{align}
        and 
        \begin{align}
        \label{eq:dMD}
        dM^D(t) &= dN^D(t)-\frac{1}{n}\sum_{i=1}^n Y_i(t)d\Lambda^D(t)\nonumber \\
        &= dN^D(t) - Y(t)d\Lambda^D(t)\\
        &= d[N^D(t)-E(N^D(t))] - [Y(t)-E(Y(t))]d\Lambda^D(t)\nonumber\\
        &= [dU_{n2}(t) - U_{n4}(t)d\Lambda^D(t)]/\sqrt{n}.\nonumber
        \end{align}
        Note from (\ref{eq:dMR})-(\ref{eq:dMD}) that $V_n=\{\sqrt{n}(M^R(t),M^D(t)); \, 0\le t \le\tau\}$ is a linear functional of the four-dimensional process $U_n=\{(U_{n1}(t),\ldots,U_{n4}(t);\, 0\le t\le\tau\}$ of (\ref{eq:Un}), that was introduced in the proof of theorem \ref{thrm:comp}.
        It follows from weak convergence of $U_n$ as $n\to\infty$, towards the zero mean Gaussian process $U=\{U(t);\, 0\le t\le\tau\}$ in theorem \ref{thrm:comp}, condition 3 above theorem \ref{thrm:comp}, and the Continuous Mapping Theorem that $V_n$ converges weakly to a two-dimensional zero mean Gaussian  process $V=\{V(t)=(V_1(t),V_2(t)); \, 0\le t \le \tau\}$ on $(\mathbb{D}^2[0,\tau],||\cdot||_{\infty})$ as $n\to\infty$, with
        \begin{align*}
        dV_1(t) &= dU_1(t)-dU_3(t)-U_4(t)d\Lambda_E^R(t),\\
        dV_2(t) &= dU_2(t)-U_4(t)d\Lambda^D(t).
        \end{align*}
        Making use of the functional delta method \parencite[see e.g.][ ]{vdv1998}, that the approximation $\sqrt{n}\Phi(t)$ of $W_n(t)$ is a linear functional of the process $V_n$, and once again the Continuous Mapping Theorem, we conclude that $W_n=\{W_n(t); t \in[0,\tau]\}$ converges weakly to the zero mean Gaussian process $W=\{W(t); \, 0\le t\le\tau\}$ on $(\mathbb{D}[0,\tau],||\cdot||_{\infty})$ as $n\to\infty$, defined as
        $$
        W(t) = \int_0^tS(u)\left[\frac{dV_1(u)}{C(u)S(u)} - \int_0^u \frac{1}{1-\Delta\Lambda^D(s)} \frac{dV_2(s)}{C(s)S(s)}d\Lambda_E^R(u)\right], 
        $$
        and with covariance function $v(s,t)=\mbox{Cov}[W(s),W(t)]=E[W(s)W(t)]$.
\end{proof}            

\subsection{Proof of theorem 3}
\begin{proof} 
We will start by deriving an expression for the covariance function $\sigma^2(s,t)$ of the estimator $\hat{\mu}_E$ before we present the estimator (\ref{eq:hsi})-(\ref{eq:hPhii}) of this covariance function. Because of condition 5, the cumulative hazard function $t\to \Lambda^D(t)$ is assumed to be continuous. For this reason, the expression for $\sqrt{n}\Phi(t)$ in the proof of theorem \ref{thrm:est} simplifies to 
    \begin{align}
        \label{eq:Phit}
        \sqrt{n}\Phi(t)=\sqrt{n}\int_0^tS(u)\left[\frac{dM^R(u)}{C(u)S(u)} - \int_0^u \frac{dM^D(s)}{C(s)S(s)}d\Lambda_E^R(u)\right].
    \end{align}
    Through (\ref{eq:estimand}) and integration by parts we can rewrite (\ref{eq:Phit}) as 
    \begin{align}
        \label{eq:Phit2}
        \sqrt{n}\Phi(t)&=\sqrt{n}\left[\int^t_0\frac{dM^R(u)}{C(u)} - \mu_E(t)\int_0^t\frac{dM^D(u)}{C(u)S(u)}\right.  \nonumber\\ &\left.+\int_0^t\mu_E(u)\frac{dM^D(u)}{C(u)S(u)} \right].
    \end{align}

Recall from (\ref{eq:dMR})-(\ref{eq:dMD}) that the two processes $dM^R(t)=\frac{1}{n}\sum_{i=1}^n dM_i^R(t)$ and $dM^D(t)=\frac{1}{n}\sum_{i=1}^n dM_i^D(t)$ are averages of iid terms 
\begin{align}
\label{eq:dMiR}
dM_i^R(t)=dN_i^R(t)-Y_i(t)d[\Lambda_P^R(t;X_i)+d\Lambda_E^R(t)]    
\end{align}
and 
\begin{align}
\label{eq:dMiD}
dM_i^D(t)=dN_i^D(t)-Y_i(t)d\Lambda^D(t)
\end{align}
respectively. From this and (\ref{eq:Phit2}) it follows that $\Phi(t)$ is an average  
    \begin{align}
        \label{eq:Phi}
        \Phi(t)=\frac{1}{n}\sum_{i=1}^n\Phi_i(t)
    \end{align}
    of iid terms as well, with
    \begin{align}
        \label{eq:Phii}
        \Phi_i(t)=\int^t_0 \frac{dM_i^R(u)}{C(u)} - \mu_E(t)\int_0^t\frac{dM_i^D(u)}{C(u)S(u)} +\int_0^t\mu_E(u)\frac{dM_i^D(u)}{C(u)S(u)}.
    \end{align}
Based on these results we conclude that the covariance function of the estimation error process $\{\hat{\mu}_E(t)-\mu_E(t); \, 0\le t \le \tau\}$ can be approximated with
\begin{align}
    \label{eq:sigexp}
    \sigma^2(s,t) 
    &= \mbox{Cov}[\hat{\mu}_E(s)-\mu_E(s),\hat{\mu}_E(t)-\mu_E(t)]\nonumber\\
    &\approx E[\Phi(s)\Phi(t)]\\
    &= \sum_{i=1}^nE[\Phi_i(s)\Phi_i(t)]/n^2\nonumber
\end{align}
for $s,t\in[0,\tau]$. Equations (\ref{eq:dMiR})-(\ref{eq:dMiD}) and (\ref{eq:Phii})-(\ref{eq:sigexp}) suggest (in agreement with (\ref{eq:hsi})-(\ref{eq:hPhii})) a covariance estimator
\begin{align} \label{eq:covest}
    \hat  \sigma^2(s,t) =\frac{1}{n^2}\sum_{i=1}^n\hat\Phi_i(s)\hat\Phi_i(t),
\end{align}
where 
\begin{align}
    \label{eq:hPhii2}
    \hat \Phi_i(t)&=\int^t_0 \hat S(u)\frac{dN_i^R(u)-Y_i(u)d[\Lambda^R_{P}(u;X_i)+\hat \Lambda^R_E(u)]}{Y(u)}\nonumber\\ &- \hat\mu_E(t)\int_0^t\frac{dN_i^D(u)-dY_i(u)\hat \Lambda^D(u)}{Y(u)} \\
    &+\int_0^t\hat\mu_E(u)\frac{dN_i^D(u)-Y_i(u)d\hat \Lambda^D(u)}{Y(u)}\nonumber
\end{align}
is an estimate of $\Phi_i(t)$. The corresponding estimator of the covariance function $v(s,t)=\mbox{Cov}[W(s),W(t)]$, of the limiting Gaussian process $W=\{W(t);\, 0\le t \le\tau\}$ of theorem \ref{thrm:est}, is
\begin{align}
    \label{eq:hv}
    \hat{v}(s,t) = n\hat{\sigma}^2(s,t) 
    = \frac{1}{n}\sum_{i=1}^n\hat\Phi_i(s)\hat\Phi_i(t).  
\end{align}
In order to prove that $\hat{v}(s,t)$ is a uniformly consistent estimator of $v(s,t)$ as $n\to\infty$ we first deduce from (\ref{eq:Phi}), the fact that $\{\Phi_i(t)\}_{i=1}^n$ are iid with $E[\Phi_i(t)]=E[\Phi(t)]=0$, and the Law of Large Numbers, that
\begin{align}
\label{eq:v}
v(s,t) &= E[W(s)W(t)]\\
&= \lim_{n\to\infty} E[W_n(s)W_n(t)]\\
&= \lim_{n\to\infty} E[\sqrt{n}\Phi(s) \cdot \sqrt{n}\Phi(t)]\nonumber\\
&= \lim_{n\to\infty} \sum_{i=1}^n 
E[\Phi_i(s)\Phi_i(t)]/n\\
&= \mbox{plim}_{n\to\infty} v_n(s,t)
,\nonumber
\end{align}
for each $s,t\in [0,\tau]$, with 
\begin{align}
\label{eq:vn}
v_n(s,t) = \frac{1}{n} \sum_{i=1}^n\Phi_i(s)\Phi_i(t)
\end{align}
and with $\mbox{plim}$ referring to convergence in probability. Interpreting $v_n(s,t)-v(s,t)$ as a random element of $(\mathbb{D}[0,\tau]^2,||\cdot||_{\infty})$, an invariance principle can be used to show (similarly as in the proof theorem \ref{thrm:comp}) that 
\begin{align}
\label{eq:vnv}
\max_{0\le s,t \le \tau} |v_n(s,t)-v(s,t)| \conp0 \end{align}
as $n\to\infty$. From (\ref{eq:dMiR})-(\ref{eq:dMiD}), (\ref{eq:Phii}), (\ref{eq:hPhii2}), theorem \ref{thrm:comp} and the Continuous Mapping Theorem we have that 
\begin{align}
\label{eq:UnifPhiConv}
\max_{i\geq1}\sup_{t\in[0,\tau]}|\hat \Phi_i(t)-\Phi_i(t)|\conp0
\end{align} 
as $n\to\infty$, which in turn implies
\begin{align}
\label{eq:hvvn}
\max_{0\le s,t \le \tau} |\hat{v}(s,t)-v_n(s,t)| \conp0 \end{align}
as $n\to\infty$. Combining (\ref{eq:vnv}) and (\ref{eq:hvvn}) we finally deduce (\ref{eq:hvv}). 
\end{proof}

\subsection{Supplementary tables}
\begin{table}[h]
   \caption{Simulation results with $n=2000$. The columns $\mbox{SE}(\hat\mu_E(t))$ and $\mbox{SD}(\hat\mu_E(t))$ represent the mean of the standard errors $\hat{\sigma}(t,t)$ and the mean of the true standard deviations $\sigma(t,t)$ respectively, over 5000 replicates. The rightmost column $1-\alpha_{\mbox{\scriptsize true}}$ represents the actual coverage of the confidence intervals $I(t)$ with nominal cove\-rage $1-\alpha=95\%$. Time is given in days.}
\centering
\begin{tabular}{ccc|cccc}
  \hline
 Scenario & Time $t$ & $\mu_E(t)$ & $\hat\mu_E(t)$ & $\mbox{SE}(\hat\mu_E(t))$ & $\mbox{SD}(\hat\mu_E(t))$ & $1-\alpha_{\mbox{\scriptsize true}}$\\ 
  \hline
1 & 182 & 0.51 & 0.52 & 0.02 & 0.02 & 0.96 \\ 
   & 365 & 0.85 & 0.85 & 0.03 & 0.03 & 0.96 \\ 
   & 730 & 1.42 & 1.42 & 0.05 & 0.04 & 0.98 \\ \hline
  2 & 182 & 0.19 & 0.19 & 0.01 & 0.01 & 0.95 \\ 
   & 365 & 0.31 & 0.32 & 0.02 & 0.02 & 0.96 \\ 
   & 730 & 0.53 & 0.53 & 0.03 & 0.02 & 0.98 \\ \hline
  3 & 182 & 0.55 & 0.55 & 0.02 & 0.02 & 0.95 \\ 
   & 365 & 0.96 & 0.96 & 0.03 & 0.03 & 0.95 \\ 
   & 730 & 1.78 & 1.78 & 0.05 & 0.05 & 0.96 \\ \hline
  4 & 182 & 0.20 & 0.20 & 0.01 & 0.01 & 0.95 \\ 
   & 365 & 0.35 & 0.36 & 0.02 & 0.02 & 0.95 \\ 
   & 730 & 0.65 & 0.66 & 0.03 & 0.03 & 0.96 \\ \hline
  5 & 182 & 0.70 & 0.71 & 0.03 & 0.03 & 0.95 \\ 
   & 365 & 1.11 & 1.12 & 0.04 & 0.04 & 0.95 \\ 
   & 730 & 1.73 & 1.76 & 0.07 & 0.06 & 0.96 \\ \hline
  6 & 182 & 0.26 & 0.26 & 0.02 & 0.02 & 0.95 \\ 
   & 365 & 0.41 & 0.42 & 0.02 & 0.02 & 0.96 \\ 
   & 730 & 0.65 & 0.65 & 0.03 & 0.03 & 0.97 \\ \hline
  7 & 182 & 0.78 & 0.79 & 0.03 & 0.03 & 0.95 \\ 
   & 365 & 1.33 & 1.34 & 0.05 & 0.05 & 0.94 \\ 
   & 730 & 2.37 & 2.39 & 0.09 & 0.08 & 0.95 \\ \hline
  8 & 182 & 0.29 & 0.29 & 0.02 & 0.02 & 0.95 \\ 
   & 365 & 0.49 & 0.49 & 0.02 & 0.02 & 0.94 \\ 
   & 730 & 0.89 & 0.89 & 0.04 & 0.04 & 0.96 \\ 
   \hline
\end{tabular}
    \label{tab:simres2000}
\end{table}

\end{document}